\theoremstyle{definition}
\newtheorem{theorem}{Theorem}
\newtheorem{lemma}[theorem]{Lemma}
\DeclareMathOperator*{\argmin}{argmin}
\begin{document}

	\title
	{Entropy-Constrained Maximizing Mutual Information Quantization}

	\author{\IEEEauthorblockN{Thuan Nguyen}
		\IEEEauthorblockA{School of Electrical and\\Computer Engineering\\
			Oregon State University\\
			Corvallis, OR, 97331\\
			Email: nguyeth9@oregonstate.edu}
		\and
		\IEEEauthorblockN{Thinh Nguyen}
		\IEEEauthorblockA{School of Electrical and\\Computer Engineering\\
			Oregon State University\\
			Corvallis, 97331 \\
			Email: thinhq@eecs.oregonstate.edu}
	}

	\maketitle
	\begin{abstract}
In this paper, we investigate the quantization of the output of a binary input discrete memoryless channel that maximizing the mutual information between the input and the quantized output under an entropy-constrained of the quantized output.  A polynomial time algorithm is introduced that can find the truly global optimal quantizer. This results hold for binary input channels with arbitrary number of quantized output. Finally, we extend this results to binary input continuous output channels and show a sufficient condition such that  a single threshold quantizer is an optimal quantizer. Both theoretical results and numerical results are provided to justify our techniques. 
	\end{abstract}
	Keyword: vector quantization, partition, impurity, concave, constraints, mutual information.
	\section{Introduction}
Recently, the problem of quantization that maximizing the mutual information between input and quantized output is a hot topic in information theory society. The design of that quantizers is important in the sense of designing the communication decoder i.e., polar code decoder \cite{tal2011construct} and LDPC code decoder \cite{romero2015decoding}. Over a past decade, many algorithms was proposed  \cite{kurkoski2014quantization}, \cite{zhang2016low}, \cite{winkelbauer2013channel}, \cite{iwata2014quantizer}, \cite{mathar2013threshold}, \cite{sakai2014suboptimal},  \cite{koch2013low}, \cite{nguyen2018capacities}, \cite{he2019dynamic}. However, due to the non-linear of partition, finding the global optimal of partition $M$ data points to $K$ subsets is difficult in a general setting. Of course, a naive
exhaustive search on the $M$ points results in the time complexity
of $O(K^M)$ which can quickly become computationally
intractable even for modest values of $M$ and $K$. In \cite{zhang2016low}, a iteration algorithm is proposed to find the locally optimal quantizer with time complexity of $O(TKMN)$ where $T$ is the number of iterations in the algorithms and $N$ is the size of the channel input. Unfortunately, these algorithms can get stuck at a locally optimal solution which can be far away from the globally optimal solutions. However, under a special condition of binary input channel $N=2$, the global optimal quantizer can be found efficiently with the polynomial time complexity of $O(M^3)$ in the worst case \cite{kurkoski2014quantization}. The complexity is further reduced to $O(KNM)$ using the famous SMAWK algorithm \cite{iwata2014quantizer}.

As an extension, quantization that maximizes the mutual information  under the entropy-constrained is very important in the sense of limited communication channels. For example, one wants to quantize/compress the data to an intermediate quantized output before transmits these quantized output to the destination over a limited rate communication channel, then the entropy of quantized output that denotes the lowest compression rate is important. Of course, we want to keep the largest mutual information between input and quantized output while the transmission rate is lower than the channel capacity. That said, the problem of quantization that maximizing mutual information under entropy-constrained is an interesting problem that can be applied in many scenarios.
While the problem of quantization that maximizing the mutual information was thoroughly investigated, there is a little of literature about the problem of quantization maximizing mutual information under the entropy-constrained. As the first article,  Strouse et al.   proposed an iteration algorithm to find the local optimal quantizer that maximizing the mutual information under the entropy-constrained of quantized-output \cite{strouse2017deterministic}. In \cite{nguyen2019minimizing}, the authors generalized the results in \cite{strouse2017deterministic} to find the local optimal quantizer that minimizes an arbitrary impurity function while the quantized output constraint is an arbitrary concave function. However, as the best of our knowledge, there is no work that can determine the globally optimal quantizer that maximizes the mutual information under the entropy-constrained even for the binary input channels. It is worth noting that the very similar setting was established long time ago called entropy-constrained scalar quantization \cite{Marco2004PerformanceOL}, \cite{Gyorgy2001OnTS} and entropy-constrained vector quantization \cite{Chou1989EntropyconstrainedVQ}, \cite{Gersho1991VectorQA}, \cite{Zhao2008OnEV} where the quantized output has to satisfy the entropy-constrained and squared-error distortion between input and quantized output is minimized.

 In this paper, we introduce a polynomial time algorithm that can find the truly global optimal quantizer  if the channel input is binary. This result holds for any binary input channel with arbitrary number of quantized output. Finally, we extend this results to binary input continuous binary output channels and show a sufficient condition such that  a single threshold quantizer is an optimal quantizer.
The outline of our paper is as follows.  In Section \ref{sec: formulation}, we describe the problem formulation and its applications.  In Section \ref{sec: solution}, we review the results in learning theory which can be applied to find the optimal quantizer.  In Section \ref{sec: algorithm}, we provide  a polynomial time algorithm to find the truly global optimal quantizer if the channel input is binary. Moreover, we extend the results to binary input continuous binary output channels and state a sufficient condition such that a single threshold quantizer is optimal. The simulation result is provided in Sec. \ref{sec: simulation}. Finally,  we provide a few concluding remarks in Section \ref{sec: conclusion}.

\section{Problem Formulation}
	\label{sec: formulation}
		\begin{figure}
		\centering
		\includegraphics[width=2.5 in]{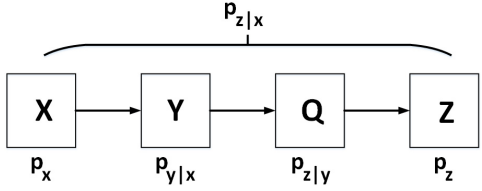}\\
		\caption{A discrete memoryless channel having $N$ inputs and $K$ quantized  outputs using  $K-1$ thresholds.}\label{fig: 1}
	\end{figure}	
	

Fig. \ref{fig: 1} illustrates our channel.   The discrete input $X=\{x_1,x_2,\dots,x_N \}$  with a given pmf $p_X=\{p_{x_1},p_{x_2},\dots,p_{x_N}\}$. Let the channel output $Y=\{y_1,y_2,\dots,y_M \}$ that is specified by the distribution $p_Y=\{p_{y_1},p_{y_2},\dots,p_{y_M}\}$ and  a conditional distribution $p_{y_j|x_i}$ for $j=1,2,\dots,M$, $i=1,2,\dots,N$. The joint distribution $p_{x_n,y_m}$, therefore, is given. The output $Y$ is then quantized to $Z=\{z_1,z_2,\dots,z_K \}$ with the distribution $p_Z=\{p_{z_1},p_{z_2},\dots,p_{z_K} \}$, $K<M$, by a possible stochastic quantizer $Q_{z|y}=p_{z_k|y_j}$. 
One wants to design the optimal quantizer to maximize the mutual information between input $X$ and quantized output $Z$ while the quantized output $Z$ has to satisfy an entropy-constrained. Thus, in this paper we are interested to find the optimal quantizer $Q^*$ such that.

\begin{equation}
\label{eq: problem formulation 1}
Q^*=\max_{Q} \beta I (X;Z)-H(Z).
\end{equation}
where $\beta>0$ is a parameter that controls the trade off between maximizing the mutual information $I(X;Z)$ or minimize the entropy of quantized output $H(Z)$. The mutual information $I(X;Z)$ is defined by
\begin{equation*}
I(X;Z)=H(Z)-H(Z|X)=H(X)-H(X|Z).
\end{equation*}
The entropy function $H(Z)$ is defined by
\begin{equation*}
H(Z)=-\sum_{k=1}^{K}p_{z_k} \log (p_{z_k}).
\end{equation*}
From $p_X$ is given, the problem in (\ref{eq: problem formulation 1}) is equivalent to
\begin{equation}
\label{eq: problem formulation 2}
Q^*=\min_{Q} \beta H (X|Z) + H(Z).
\end{equation}

\section{Connection to minimum impurity under concave constraint problem}
\label{sec: solution}

In this section, we want to establish the connection between the problem of discrete channel quantization maximizing mutual information under the entropy-constrained and the area of statistical learning theory. Similar to the setting in Sec. \ref{sec: formulation}, consider a discrete random variable $X=\{x_1,x_2,\dots,x_N\}$ which is stochastically linked to an observation discrete data $Y=\{y_1,y_2,\dots,y_M\}$. One wants to quantize $Y$ to a smaller levels of quantized subsets $Z=\{z_1,z_2,\dots,z_K\}$ such that the cost function $F(X,Z)$ between $X$ and $Z$ is minimized while the probability of quantized output satisfies a constraint $G(Z) \leq D$ for a pre-specified constant $D$ \cite{nguyen2019minimizing}. 
\begin{equation}
\label{eq: problem formulation 3}
Q^*=\min_{Q} \beta F (X,Z) + G(Z).
\end{equation}

The mapping $Q(Y) \rightarrow Z$ called a classifier/partition in the context of learning theory that is very similar to quantizer in this paper. The cost function $F(.)$ is called the impurity function that is a way to measure the goodness of quantization/classification/partition.
The original impurity $F(X,Z)$ between $X$ and $Z$ is defined by adding up the weighted loss function in each output subset $z_k \in Z$ \cite{burshtein1992minimum}, \cite{chou1991optimal}, \cite{coppersmith1999partitioning}. 
\begin{eqnarray}
\label{eq: definition of impurity}
F(X,Z) &=& \sum_{k=1}^{K} p_{z_k} f[p_{x|z_k}] \nonumber\\
&=&  p_{z_k} f[p_{x_1|z_k},\! p_{x_2|z_k},\! \dots,\! p_{x_N|z_k}]
\label{eq: definition of impurity}
\end{eqnarray}
where $p_{x|z_k}=[p_{x_1|z_k}, p_{x_2|z_k}, \dots, p_{x_N|z_k}]$ denotes the conditional distribution $p_{x|z_k}$. The factor $p_{z_k}$ denotes the weight of subset $z_k$. The function $f(.)$ is an arbitrary \textit{concave function}. 

The impurity function in (\ref{eq: definition of impurity}) can be rewritten by the function of the joint distribution $p_{x_n,z_k}$ for $n=1,2,\dots,N$ and $k=1,2,\dots,K$. Let define 
\begin{equation*}
p_{x_n,z_k}=\sum_{y_j \in Y}^{}p_{x_n,y_j}p_{z_k|y_j}. 
\end{equation*}
Thus,
\begin{small}
\begin{eqnarray}
p_{z_k} f[p_{x|z_k}] &\!=\!&  (\sum_{n\!=\!1}^{N}p_{x_n,z_k}) f[\dfrac{p_{x_1\!,\!z_k}}{\sum_{n\!=\!1}^{N}p_{x_n\!,\!z_k}}, \dots, \dfrac{p_{x_N\!,\!z_k}}{\sum_{n\!=\!1}^{N}p_{x_n\!,\!z_k}}] \nonumber\\
\label{eq: new formulation}
\end{eqnarray}
\end{small}
where $\sum_{n=1}^{N}p_{x_n,z_k}$ denotes the weight of $z_k$ and   $p_{x|z_k}=[\dfrac{p_{x_1,z_k}}{\sum_{n=1}^{N}p_{x_n,z_k}}, \dots, \dfrac{p_{x_N,z_k}}{\sum_{n=1}^{N}p_{x_n,z_k}}] $ denotes conditional distribution $p_{x|z_k}$.  \textit{The impurity function, therefore, is only the function of the joint distribution $p_{x_n,z_k}$}. 

Various of common impurity functions have been suggested in \cite{burshtein1992minimum}. However, in this paper, we are interested to the entropy impurity function such that
\begin{equation*}
f[p_{x|z_k}] = - \sum_{n=1}^{N} p_{x_n|z_k} \log (p_{x_n|z_k}).
\end{equation*}
Thus,
\begin{small}
\begin{eqnarray}
F(X,Z)&=&\sum_{k=1}^{K} p_{z_k}[  \sum_{n=1}^{N} - p_{x_n|z_k} \log p_{x_n|z_k} ] \nonumber\\
&=&\sum_{i=1}^{K} p_{z_k} H(X|Z_k) = H(X|Z) \label{eq: 5}.
\end{eqnarray}
\end{small}
On the other hand, the constraint $G(Z)$ can be an arbitrary \textit{concave function} over the quantized output $p_Z$. Thus, the entropy-constrained can be constructed if $G(.)$ is entropy function. 
\begin{small}
\begin{equation}
\label{eq: 6}
G(Z)= -\sum_{k=1}^{K}p_{z_k} \log (p_{z_k}).
\end{equation}

\end{small}
From (\ref{eq: 5}), (\ref{eq: 6}), obviously that \textit{ the problem in (\ref{eq: problem formulation 2}) is a sub-problem of the problem in (\ref{eq: problem formulation 3})}. That said, all the elegant and general theoretical results in \cite{nguyen2019minimizing} can be applied to solve problem (\ref{eq: problem formulation 2}). 
Based on the general results in \cite{nguyen2019minimizing}, the optimal quantizer has the followings properties: (i) the optimal quantizer is a deterministic quantizer, i.e., the partition is a hard partition and therefore $p_{z_k|y_i}=0$ or $p_{z_k|y_i}=1$ for $\forall$ $i,k$; (ii) the optimal quantizer is equivalent to hyper-plane cuts in the space of posterior distribution; (iii) the necessary optimality condition for optimal quantizer is established. The detail results are followings. 

\subsection{Hard partition is optimal }
Noting that in the setting of problem (\ref{eq: problem formulation 2}), the optimal quantizer may be a stochastic (soft) quantizer  i.e., $0 \leq p_{z_k|y_j} \leq 1$ or a data $y_i$ can belong to more than a quantized output with an arbitrary probability.  However, from the Lemma 1 in \cite{nguyen2019minimizing}, the optimal quantizer is a hard quantizer. That said, each data $y_i$ is quantized to a deterministic output $z_l$ or $p_{z_k|y_j} =\{0,1\}$. Thus, we can reduce our interest to only the deterministic quantizers. Lemma 1 extends the ideas in \cite{kurkoski2014quantization}, showing that the purely stochastic quantizers, that is, nondeterministic quantizers, never have better performance than deterministic quantizers under an entropy-constrained quantization.

\subsection{Necessary optimality condition for an optimal quantizer}

From Theorem in \cite{nguyen2019minimizing}, we note that the optimal quantizer $Q^*$ should allocate the data $y_i$ to $z_l$ if $D(y_i,z_l) < D(y_i,z_k)$ $\forall$ $k=1,2,\dots,K$ and $k \neq l$ where $D(y_i,z_l)$ is the "distance" from $y_i$ to $z_l$. This result is stated as the following.

\begin{theorem}
\label{theorem: 1}
Suppose that  an optimal partition $Q^*$ yields the optimal output $Z=\{z_1,z_2,\dots,z_K \}$.
 For each optimal set $z_l$, $l \in \{1,2,\dots,K\}$, we define vector $c_l=[c_l^1,c_l^2,\dots,c_l^N]$:
\begin{equation}
\label{eq: 16}
c_l^n= \frac{\partial p_{z_l}f[p_{x|z_l}] }{\partial p_{x_n,z_l}}, \forall n \in \{1,2,\dots,N\}, 
\end{equation}
where $p_{z_l}f[p_{x|z_l}$ is defined in (\ref{eq: new formulation}). 
We also define 
\begin{equation}
\label{eq: 17}
d_l=\frac{\partial G(Z)}{\partial p_{z_l}}.
\end{equation}

Define the "distance" from data $y_i \in Y$ to data $z_l$ is 
\begin{eqnarray}
D(y_i,z_l) = \beta \sum_{n=1}^{N}[p_{x_n|y_i} c_l^n] + d_l \label{eq: optimality condition}.
\end{eqnarray}

Then, data $y_i$ is quantized to $z_l$ if and only if $D(y_i,z_l) \leq D(y_i,z_k)$ for $\forall k \in \{1,2,\dots,K, k \neq l\}$. 
\end{theorem}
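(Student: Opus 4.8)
\emph{Proof sketch.} My plan is a local exchange (perturbation) argument. By Lemma~1 of \cite{nguyen2019minimizing} it suffices to consider deterministic partitions, so I would fix an optimal deterministic $Q^*$ with output sets $z_1,\dots,z_K$ and suppose, toward a contradiction, that some $y_i$ is assigned to $z_l$ while $D(y_i,z_k)<D(y_i,z_l)$ for some $k\neq l$. I would then build a competitor $Q'$ by moving $y_i$ from $z_l$ to $z_k$ and leaving every other assignment untouched, and show that the objective $\beta F(X,Z)+G(Z)$ strictly decreases, contradicting optimality of $Q^*$.

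The first step is to view the objective as a function of the joint distribution only, using (\ref{eq: new formulation}): $F(X,Z)=\sum_k p_{z_k}f[p_{x|z_k}]$ is a sum of the ``perspective'' maps $\phi(u):=(\sum_n u_n)\,f\!\big(u/\sum_n u_n\big)$ evaluated at the block $u_k=(p_{x_1,z_k},\dots,p_{x_N,z_k})$, and each such $\phi$ is concave (perspective of a concave function composed with a linear map), while $G(Z)$ is concave in $p_Z$ by assumption. Moving $y_i$ changes only two blocks, $u_l\mapsto u_l-\delta$ and $u_k\mapsto u_k+\delta$ with $\delta=(p_{x_1,y_i},\dots,p_{x_N,y_i})$, and shifts the marginals by $p_{z_l}\mapsto p_{z_l}-p_{y_i}$, $p_{z_k}\mapsto p_{z_k}+p_{y_i}$. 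Applying the first-order (supporting-hyperplane) inequality for concave functions to each of $\phi(u_l-\delta)-\phi(u_l)$, $\phi(u_k+\delta)-\phi(u_k)$ and $G(Z')-G(Z^*)$, with all gradients evaluated at the data of $Q^*$, I get
\begin{equation*}
\big[\beta F(X,Z')+G(Z')\big]-\big[\beta F(X,Z^*)+G(Z^*)\big]\ \le\ \beta\big(\nabla\phi(u_k)-\nabla\phi(u_l)\big)\cdot\delta+p_{y_i}(d_k-d_l).
\end{equation*}
Since $\partial\phi(u_l)/\partial p_{x_n,z_l}=c_l^n$ (by (\ref{eq: 16})), $d_l$ is as in (\ref{eq: 17}), and $p_{x_n,y_i}=p_{y_i}\,p_{x_n|y_i}$, the right-hand side collapses to $p_{y_i}\big(D(y_i,z_k)-D(y_i,z_l)\big)$ by the very definition (\ref{eq: optimality condition}). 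As $p_{y_i}>0$ and $D(y_i,z_k)<D(y_i,z_l)$, this is strictly negative — the desired contradiction — which proves the ``only if'' direction; the ``if'' direction then follows since an optimal partition can never keep $y_i$ in a set that is not a minimizer of $D(y_i,\cdot)$, so it must place $y_i$ in $z_l$ whenever $z_l$ (uniquely) attains the minimum.

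The three concavity estimates and the algebra that folds the bound into $p_{y_i}(D(y_i,z_k)-D(y_i,z_l))$ are routine. The step I expect to need care is the boundary/degenerate configurations: if $\phi$ or $G$ is not differentiable at $u_l$, $u_k$, or at the perturbed marginals (for instance when $z_l=\{y_i\}$, so $u_l-\delta=0$, or when some $p_{x_n|z_k}$ equals $0$), the objects $c_l^n$ and $d_l$ must be interpreted as (sub)gradients and one has to check the inequalities still hold; and when $D(y_i,z_k)=D(y_i,z_l)$ the bound is only non-strict, which is precisely why the statement is phrased with ``$\le$'' and why ties leave the assignment of $y_i$ ambiguous. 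Apart from this, everything is a direct consequence of the concavity of $\phi$ and of $G$.
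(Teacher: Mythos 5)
Your proof is correct: the local exchange argument, with the first-order (supporting-hyperplane) inequalities applied to the concave perspective map $\phi$ on the two affected blocks and to $G$ on the two affected marginals, is exactly the standard route, and the algebra that collapses the bound to $p_{y_i}\big(D(y_i,z_k)-D(y_i,z_l)\big)<0$ is right. The paper itself gives no argument---it defers entirely to Theorem~1 of \cite{nguyen2019minimizing}, which proceeds along the same lines---so your write-up is essentially the proof the paper is implicitly relying on, only made self-contained. The caveats you flag are the right ones: ties make the ``if and only if'' really just a necessary condition, and the gradients must be read as supergradients at boundary configurations (e.g.\ $p_{z_l}\to 0$ or vanishing components of $p_{x|z_k}$), where the concavity inequality still holds.
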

\begin{proof}
Please see Theorem 1 in \cite{nguyen2019minimizing}. 
\end{proof}

Since (\ref{eq: problem formulation 2}) is a sub-problem of (\ref{eq: problem formulation 3}) using $f[.]$ is entropy function and the constraint $G(.)$ is entropy-constrained, the distance $D(y_i,z_l)$ is specified in the following lemma.  
\begin{lemma}
\label{lemma: 1}
The optimal quantizer $Q^*=\min_{Q} \beta H (X|Z) + H(Z)$ should quantize the data $y_i$ to $z_l$ if and only if $D(y_i,z_l) \leq D(y_i,z_k)$ for $\forall k \in \{1,2,\dots,K, k \neq l\}$ where
\begin{eqnarray}
D(y_i,z_l) &=& \beta \sum_{n=1}^{N}p_{x_n|y_i} \log (\dfrac{p_{x_n|y_i}}{p_{x_n|z_l}}) - \log (p_{z_l}). 
\label{eq: distance this problem}
\end{eqnarray}
\end{lemma}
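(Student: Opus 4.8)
The plan is to specialize Theorem~\ref{theorem: 1} by explicitly evaluating the two gradient quantities $c_l^n$ and $d_l$ for the entropy impurity $f[\cdot]$ and the entropy constraint $G(\cdot)$, plugging them into the distance formula (\ref{eq: optimality condition}), and finally dropping every additive term that does not depend on the destination cell $z_l$, since such terms leave the comparison $D(y_i,z_l)\le D(y_i,z_k)$ unchanged.

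First I would compute $c_l^n$. Using (\ref{eq: new formulation}) with $f[q_1,\dots,q_N]=-\sum_n q_n\log q_n$ and writing $p_{z_l}=\sum_{m=1}^{N}p_{x_m,z_l}$, one obtains
\begin{equation*}
p_{z_l}f[p_{x|z_l}] = -\sum_{n=1}^{N}p_{x_n,z_l}\log p_{x_n,z_l} + \Big(\sum_{n=1}^{N}p_{x_n,z_l}\Big)\log\Big(\sum_{m=1}^{N}p_{x_m,z_l}\Big).
\end{equation*}
Treating the $N$ joint masses $p_{x_m,z_l}$ as independent coordinates and differentiating with respect to $p_{x_n,z_l}$, the first sum contributes $-\log p_{x_n,z_l}-1$ and the second contributes $\log p_{z_l}+1$, so the $\pm 1$ terms cancel and $c_l^n=-\log p_{x_n,z_l}+\log p_{z_l}=-\log p_{x_n|z_l}$. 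Differentiating $G(Z)=-\sum_k p_{z_k}\log p_{z_k}$ with respect to $p_{z_l}$ gives $d_l=-\log p_{z_l}-1$.

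Substituting into (\ref{eq: optimality condition}) yields
\begin{equation*}
D(y_i,z_l) = -\beta\sum_{n=1}^{N}p_{x_n|y_i}\log p_{x_n|z_l} - \log p_{z_l} - 1 .
\end{equation*}
The last step is to observe that $\beta\sum_{n=1}^{N}p_{x_n|y_i}\log p_{x_n|y_i}+1$ depends only on $y_i$, so it is identical for every candidate cell $z_l$; adding it to $D(y_i,z_l)$ for all $l$ simultaneously does not alter which cell attains the minimum. Adding it gives exactly
\begin{equation*}
D(y_i,z_l)=\beta\sum_{n=1}^{N}p_{x_n|y_i}\log\frac{p_{x_n|y_i}}{p_{x_n|z_l}}-\log p_{z_l},
\end{equation*}
which is (\ref{eq: distance this problem}); the if-and-only-if assignment rule is then inherited verbatim from Theorem~\ref{theorem: 1}. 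The only place requiring care is the differentiation of $p_{z_l}f[p_{x|z_l}]$, where one must remember that $p_{z_l}=\sum_n p_{x_n,z_l}$ is itself a function of the differentiation variables and apply the chain rule accordingly; once that is handled the remainder is a routine algebraic rearrangement.
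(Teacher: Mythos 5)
Your proposal is correct and follows exactly the route the paper intends: it evaluates the gradients $c_l^n=-\log p_{x_n|z_l}$ and $d_l=-\log p_{z_l}-1$ from Theorem~\ref{theorem: 1}, substitutes them into (\ref{eq: optimality condition}), and drops the additive terms that are independent of the cell index $l$ (hence invariant under the comparison $D(y_i,z_l)\le D(y_i,z_k)$) to arrive at (\ref{eq: distance this problem}). The paper's own proof is a one-line sketch of precisely this computation, so your write-up simply supplies the omitted details, including the correct chain-rule treatment of $p_{z_l}=\sum_n p_{x_n,z_l}$.
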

\begin{proof}
By taking the derivative of $c_l^n$ and $d_l$ in Theorem \ref{theorem: 1} and ignoring the constant without changing the different between $D(y_i,z_l)-D(y_i,z_k)$, $k \neq l$, the new distance metric is constructed. 
\end{proof}

The first component in the distance $D(y_i,z_l)$ is actually the Kullback-Leibler distance between data and the centroid of quantized output \cite{zhang2016low} while the second component denotes the impact of entropy-constrained. Obviously that minimizing $-log (p_{z_l})$ meaning that one should quantize $y_i$ to a quantized output $z_l$ that already having a large probability.

\subsection{Separating hyper-plane condition for optimality}
Let $p_{x|y_j}$ be the conditional probability distribution that is defined by a vector in $N$ dimensional probability space
\begin{equation*}
p_{x|y_j}=[p_{x_1|y_j},p_{x_2|y_j},\dots,p_{x_{N}|y_j}]
\end{equation*}
where $0 \leq p_{x_i|y_j} \leq 1$ and $\sum_{i=1}^{N}p_{x_i|y_j} = 1$. 
Thus, each data  $y_i$ is equivalent to a $N-1$ dimensional vector $\bar{p_{x|y_j}}$. 
\begin{equation*}
\bar{p_{x|y_j}}=[p_{x_1|y_j},p_{x_2|y_j},\dots,p_{x_{N-1}|y_j}].
\end{equation*}
For convenient, we denote $\bar{p_{x|y_j}}=v_j$. Now, the quantizer $Q(Y) \rightarrow Z$ is equivalent to a quantizer $\bar{Q}$
\begin{equation*}
\bar{Q} : \{v_1,v_2,\dots,v_M\} \rightarrow \{z_1,z_2,\dots,z_K\}.
\end{equation*}
Noting that two quantizers $Q$ and $\bar{Q}$ are equivalent in the sense that if $Q(y_j)=z_k$ then $\bar{Q}(v_j)=z_k$. From Sec. III-C in \cite{nguyen2019minimizing}, the following Theorem holds.
\begin{theorem}
\label{theorem: 2}
There exist an optimal quantizer $Q^*$ such that the optimal partition is separated by hyper-plane cuts in the space of posterior probability $\bar{p_{x|y_j}}=v_j$.
\end{theorem}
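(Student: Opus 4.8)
The plan is to derive the hyperplane structure directly from the necessary optimality condition of Lemma~\ref{lemma: 1}. By that lemma, an optimal deterministic quantizer assigns each data point $y_i$ to the level $z_l$ that minimizes
\[
D(y_i,z_l) = \beta\sum_{n=1}^{N} p_{x_n|y_i}\log p_{x_n|y_i} \;-\; \beta\sum_{n=1}^{N} p_{x_n|y_i}\log p_{x_n|z_l} \;-\; \log p_{z_l}.
\]
The first sum depends only on $y_i$, not on the candidate level $l$, so it cancels in every pairwise comparison $D(y_i,z_l)-D(y_i,z_k)$. Hence the decision between $z_l$ and $z_k$ is governed by the sign of
\[
\Big(-\beta\sum_{n=1}^{N} p_{x_n|y_i}\log p_{x_n|z_l} - \log p_{z_l}\Big) - \Big(-\beta\sum_{n=1}^{N} p_{x_n|y_i}\log p_{x_n|z_k} - \log p_{z_k}\Big),
\]
which is an \emph{affine} function of the posterior vector $(p_{x_1|y_i},\dots,p_{x_N|y_i})$, since each term $-\beta\sum_n p_{x_n|y_i}\log p_{x_n|z_l}$ is linear in that vector and $-\log p_{z_l}$ is a constant determined by the (fixed) optimal output $Z$.

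Next I would use the simplex constraint $\sum_{n=1}^{N} p_{x_n|y_i}=1$ to eliminate the last coordinate, so that the affine function above becomes $\langle a_{lk}, v_i\rangle - b_{lk}$, where $v_i=\bar{p_{x|y_j}}$ is the reduced $(N-1)$-dimensional posterior vector and $a_{lk}\in\mathbb{R}^{N-1}$, $b_{lk}\in\mathbb{R}$ depend only on the optimal centroids $p_{x|z_l}, p_{x|z_k}$ and weights $p_{z_l}, p_{z_k}$. Therefore the set of posterior points that (weakly) prefer $z_l$ over $z_k$ is the half-space $\{v : \langle a_{lk}, v\rangle \le b_{lk}\}$, and the optimal cell is $\bar{Q}^{-1}(z_l) = \bigcap_{k\neq l}\{v : \langle a_{lk}, v\rangle \le b_{lk}\}$ — an intersection of half-spaces, i.e., a region cut out by hyperplanes. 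Points lying exactly on a boundary ($D(y_i,z_l)=D(y_i,z_k)$) are assigned by a fixed tie-breaking rule, say to the smallest index; this does not change the objective value and still produces a partition induced by the same hyperplanes. This tie-breaking is exactly why the statement is phrased as ``there exist an optimal quantizer.''

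The step I expect to be the main obstacle is not the algebra but the degenerate case in which some optimal centroid coordinate $p_{x_n|z_l}$ equals zero, making $\log p_{x_n|z_l}=-\infty$ and the affine form ill-defined. I would handle this either by arguing that for the entropy impurity the optimal centroids have full support (so the coefficients $a_{lk}$ are finite), or, failing that, by a perturbation/continuity argument: push the offending centroid slightly into the interior of the simplex, establish the hyperplane partition for the perturbed problem, and pass to the limit, noting that a $y_i$ with $p_{x_n|y_i}>0$ is simply forced out of such a cell (the corresponding ``hyperplane'' degenerates to excluding a face of the simplex). Apart from this boundary bookkeeping, the result follows immediately from Lemma~\ref{lemma: 1} together with the linearity of $-\beta\sum_n p_{x_n|y_i}\log p_{x_n|z_l}$ in the posterior vector.
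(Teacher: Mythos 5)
Your proposal is correct and follows essentially the same route the paper relies on: the paper gives no proof of its own (it defers to Sec.~III-C of the cited reference), but that argument is exactly the one you reconstruct --- the necessary optimality condition of Lemma~\ref{lemma: 1} makes each pairwise comparison $D(y_i,z_l)-D(y_i,z_k)$ affine in the posterior vector (the $\beta\sum_n p_{x_n|y_i}\log p_{x_n|y_i}$ term cancels), so each optimal cell is an intersection of half-spaces in the reduced $(N-1)$-dimensional simplex, with ties broken arbitrarily to realize ``there exists'' an optimal hyperplane-separated quantizer. Your extra care about centroids with a zero coordinate (where $\log p_{x_n|z_l}=-\infty$) is a legitimate boundary case that the paper glosses over entirely, and your perturbation/limiting treatment of it is sound.
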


The advantage of using quantizer $\bar{Q}$ is that it works in $N-1$ dimensional space while $Q$ works in the $N$ dimensional space.  For a large value of $N$, the different is negligible. However, if $N$ is small, i.e., $N=2$, the Theorem \ref{theorem: 2} gives a powerful condition which is characterized more detail in Sec. \ref{sec: algorithm}. 

\section{Quantizer Design Algorithm}
 \label{sec: algorithm}

\subsection{Quantizer algorithm for binary input discrete output channels}
In this section, we consider the channels with binary input distribution i.e., $|X|=N=2$. From Theorem \ref{theorem: 2}, the optimal quantizer is equivalent to a hyper-plane in one dimensional space. However, a hyper-plane in one dimensional space is a point and the optimal quantizer is equivalent to scalar quantizer in the order of $v_{j}=\bar {p_{x|y_i}}=p_{x_1|y_j}$. That said, existing $K+1$ thresholds $a=(a_0=0,a_1,\dots,a_{K-1},a_K=1)$ such that
\begin{equation*}
a_0<a_1<\dots < a_{K-1}<a_K
\end{equation*}

and 
\begin{equation*}
Q^*(v_j)=z_k, \text{  if  }  a_{k-1} <p_{x_1|y_j} < a_{k}.
\end{equation*}

Without the loss of generality, we can order the data set $Y$ by the descending order of $p_{x_1|y_j}$ in the time complexity of $O(M \log M)$. Thus, in the rest part of this paper, we suppose that
\begin{equation}
p_{x_1|y_1} \leq p_{x_1|y_2} \leq \dots \leq p_{x_1|y_{M-1}} \leq p_{x_1|y_M}.
\end{equation}

The optimal quantizer, therefore, can be found by searching the optimal scalar thresholds $0=a_0^*<a_1^*<\dots <a_{K-1}^*<a_K^*=1$. Therefore, the problem can be cast as a 1-dimensional quantization/clustering problem that can be solved efficiently using the famous dynamic programming \cite{kurkoski2014quantization}, \cite{wang2011ckmeans}. The detail algorithm is proposed in Algorithm \ref{alg: 2}. 

\begin{footnotesize}
	\begin{algorithm}
		\caption{Dynamic programming for finding $D(1,M,K)$}
		\label{alg: 2}
		\begin{algorithmic}[1]
			\State{\textbf{Input}: $p_X$, $p_Y$, $p_{Y|X}$, $M$, $K$.}
			
			\State{\textbf{Initialization}: $D(i,j,k)=0$ for $\forall$ $j=0$ or $k=0$. }
			
			\State{\textbf{Recursion step}: }
			
			\State{\hspace{.1 in} For $k=1,2,\dots,K$}
			
			\State{\hspace{.3 in} For $j=1,2,\dots,M$}
\begin{equation*}
D(i,j,k)=\min_{0 \leq q \leq j-1} \{ D(i,q,k-1)+D(q+1,j,1) \}.
\end{equation*}
			\State{\hspace{.3 in} End For}	
			\State{\hspace{.3 in} Store the local decision:}
\begin{equation*}
H_k(j)=\argmin_{q} \{ D(i,q,k-1)+D(q+1,j,1) \}.
\end{equation*}						
			\State{\hspace{.1 in} End For}	
			\State{\textbf{Backtracking step}: Let $a_K^*=M$, for each $i=\{K-1,K-2,\dots,1\}$}
			\begin{equation*}
			a_i^*=H_{i+1}(a_{i+1}^*).
			\end{equation*}

			\State{\textbf{Output}: $D(1,M,K)$, $a^*=\{a_0, a_1^*,\dots,a_{K-1}^*,a_K^* \}$.}
		\end{algorithmic}
	\end{algorithm}
\end{footnotesize}
Now, let us define $D(i,j,k)$ as the minimum (optimal) value of $\beta H(X|Z) + H(Z)$ by partition $(y_i,y_j]$ into $k$ subsets where $0 \leq i \leq j \leq M$ and $0 \leq k \leq K$.  Each $D(i,j,k)$ is the result of using an optimal quantizer $Q^*(i,j,k)$ which separates the data in $(y_i,y_j]$ to $k$ clusters using $k-1$ thresholds. For a given $Q^*(i,j,k)$, define $w(i,j,k)$ and $t(i,j,k)$ as the values of the conditional entropy $\beta H(X|Z)$ and entropy $H(Z)$ associated with the optimal quantizer $Q^*(i,j,k)$, then
 \begin{equation}
 \label{eq: total cost function}
 D(i,j,k)=w(i,j,k)+ t(i,j,k).
 \end{equation} 
Now, the key of the dynamic programming algorithm is based on the following recursion:
 \begin{equation}
 \label{eq: recursive}
 D(i,j,k)=\min_{0 \leq q \leq j-1} \{ D(i,q,k-1)+D(q+1,j,1) \}.
 \end{equation}
In the above recursion, the value of $k$ partitions with a total of $j$ elements can be written as the sum of $k-1$ partitions with $q$ elements and one additional partition with $j-q$ elements.  Thus, minimum value can be found by searching for the right index $q$, and the recursion follows.   Again, we note that this dynamic programming approach works because the value of the large partition equals  the sum of the values of its smaller sub-partitions.
 
Now, consider initial values $D(i,j,k)=0$ if $j=0$ or $k=0$. From this initial values, using (\ref{eq: recursive}), one can compute all of $D(i,j,k)$. The optimal solution is $D(1,M,K)$. After finding the optimal solution, one can use the backtracking method to find all the optimal thresholds. The backtracking step is performed by storing the indices that result in the minimum values.  Specifically,
 \begin{equation}
 \label{eq: store local decision}
 H_k(j)=\argmin_{q} \{ D(i,q,k-1)+D(q+1,j,1) \}.
 \end{equation}	
 Then, $H_k (j)$ saves the position of ${k-1}^{th}$ threshold. Finally, let  $a_K^*=M$, for each $i= \{ K-1,K-2,\dots,1 \} $, all of other optimal thresholds can be found by backtracking. 
 \begin{equation}
 \label{eq: backtracking}
 a_i^*=H_{i+1}(a_{i+1}^*).
 \end{equation}				
\textbf{ Complexity.}  Noting that except step 5 in Algorithm \ref{alg: 2} takes the time complexity of $O(KM^2)$, other steps can be done in a linear time. Thus, the total time complexity of Algorithm \ref{alg: 2} is $O(KM^2)$.

\subsection{Quantization for binary input binary output continuous channels}

In this section, we extend the previous results to the discrete binary input continuous binary output channels. Consider a channel with discrete input $X=\{x_1,x_2 \}$ which is corrupted by a noise having  continuous distribution to produce the continuous output $y \in Y=\mathbf{R}$. Thus, the channel is specified by two continuous distribution $p_{y|x_1}=\phi_1(y)$ and $p_{y|x_2}=\phi_2(y)$. One wants to quantize continuous output $y$ back to the discrete binary quantized output $Z=\{z_1,z_2\}$. Now, consider the following variable $r(y)$
\begin{equation*}
r(y)=p_{x_1|y}=\dfrac{p_1 p_{y|x_1}}{p_1 p_{y|x_1} + p_2 p_{y|x_2}  }=\dfrac{p_1 \phi_1(y)}{p_1 \phi_1(y) + p_2 \phi_2(y)}.
\end{equation*}
Since $N=2$ and $K=2$, from the result in Theorem \ref{theorem: 2}, the optimal quantizer can be found by searching an optimal scalar threshold $0 <a^* < 1$ such that 
$$
\begin{cases}
Q(y)=z_1 \text{   if } r(y) \leq a^*,\\
Q(y)=z_2 \text{   if } r(y) > a^*.
\end{cases}
$$
Thus, the optimal quantizer can be found by an exhausted searching  over a new random variable  $0 <a < 1$. The complexity of this algorithm is $O(M)$ where $M=\dfrac{1}{\epsilon}$ and $\epsilon$ is a small number denotes the precise of the solution. From the optimal value $a^*$, the corresponding thresholds $y \in Y$ can be constructed. Interestingly, the following Lemma shows a sufficient condition where a single threshold $y \in Y=R$ is an optimal quantizer.  
\begin{lemma}
\label{lemma: 1}
If $\dfrac{\phi_2(y)}{\phi_1(y)}$ is a strictly increasing/decreasing function, a single threshold quantizer is optimal.
\end{lemma}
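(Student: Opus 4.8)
The plan is to reduce the statement to Theorem~\ref{theorem: 2} and then to use the monotone likelihood ratio hypothesis to show that the optimal threshold on the posterior $r(y)=p_{x_1|y}$ pulls back to a half-line in the $y$-domain. Recall that with $N=2$ and $K=2$ the posterior space is one dimensional, parameterized by $r(y)\in[0,1]$, so a ``hyper-plane cut'' is just a point; as already observed above, there is an optimal quantizer and a scalar $a^*\in(0,1)$ with $Q(y)=z_1$ when $r(y)\le a^*$ and $Q(y)=z_2$ when $r(y)>a^*$. Hence it suffices to show that $R^*=\{y\in\mathbf{R}: r(y)\le a^*\}$ is an interval of the form $(-\infty,\tau]$ or $[\tau,+\infty)$, since then $\tau$ is the required single threshold on $Y=\mathbf{R}$.

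The key algebraic step is to rewrite $r$ as an explicit monotone transformation of the likelihood ratio $\ell(y)=\phi_2(y)/\phi_1(y)$:
\begin{equation*}
r(y)=\dfrac{p_1\phi_1(y)}{p_1\phi_1(y)+p_2\phi_2(y)}=\dfrac{1}{1+\dfrac{p_2}{p_1}\,\ell(y)}=g\big(\ell(y)\big),\qquad g(t)=\dfrac{1}{1+(p_2/p_1)\,t}.
\end{equation*}
Since $p_2/p_1>0$, the map $g$ is strictly decreasing on $t\ge 0$. Composing monotonicities, $r(y)=g(\ell(y))$ is strictly decreasing in $y$ when $\ell$ is strictly increasing, and strictly increasing in $y$ when $\ell$ is strictly decreasing. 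Because $r$ is then strictly monotone, the sub-level set $R^*=\{y:r(y)\le a^*\}$ is an interval unbounded on one side: in the decreasing case take $\tau=\inf\{y:r(y)\le a^*\}$, giving $R^*=[\tau,+\infty)$; in the increasing case take $\tau=\sup\{y:r(y)\le a^*\}$, giving $R^*=(-\infty,\tau]$. Either way the optimal quantizer is realized by the single threshold $\tau$, which is the claim.

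The step that deserves the most care is the bookkeeping of boundary and degenerate situations rather than any deep estimate. One must allow $\tau=\pm\infty$ (one cell empty), handle the single boundary point $r(y)=a^*$ whose label is immaterial, and deal with points where $\phi_1(y)=0$ (there $r(y)=0\le a^*$, which is consistent with either half-line). The reason strict monotonicity of $\ell$ is assumed — and not mere monotonicity — is precisely to forbid a flat stretch of $r$ on which $\{r\le a^*\}$ could fail to be an interval, and to pin the threshold $\tau$ down unambiguously to a single point; this is the only place where the hypothesis is genuinely used, so the argument is otherwise a short corollary of Theorem~\ref{theorem: 2}.
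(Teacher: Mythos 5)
Your proof is correct and follows essentially the same route as the paper's: both arguments write the posterior $r(y)$ as a strictly monotone function of the likelihood ratio $\phi_2(y)/\phi_1(y)$ and conclude that the optimal posterior threshold $a^*$ from Theorem~\ref{theorem: 2} pulls back to a single threshold in $y$. Your version is slightly more careful --- you retain the prior-ratio factor $p_2/p_1$ that the paper's displayed formula drops, and you explicitly verify that the sub-level set $\{y : r(y)\le a^*\}$ is a half-line --- but these are refinements of the same argument, not a different one.
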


\begin{proof}
We consider
\begin{equation*}
r(y)=p_{x_1|y}=\dfrac{p_1 \phi_1(y)}{p_1 \phi_1(y) + p_2 \phi_2(y)}=\dfrac{1}{1+\dfrac{\phi_2(y)}{\phi_1(y)}}.
\end{equation*}
Since $\dfrac{\phi_2(y)}{\phi_1(y)}$ is a strictly increasing/decreasing function, $r(y)$ is a strictly increasing/decreasing function. Thus, for a given value of $a$, existing a single value of $y$ such that $r(y)=a$. Therefore, the optimal $a^*$ corresponds to a single value of $y^*$. Thus, a single threshold quantizer is optimal in this context. Our result is an extension of Lemma 2 in \cite{kurkoski2017single}. 
\end{proof}
\section{Numerical results}
 \label{sec: simulation}

 Consider a communication system which transmits input $X=(x_1=-2,x_2=2)$ having $p_{x_1}=p_{x_2}=0.5$ over an additive noise channel with i.i.d Gaussian noise $N(\mu=0,\sigma=1)$.  Due to the additive property, the conditional density of output $y \in Y=R$ given input $x_1$ is $\phi_1(y)=p_{y|x_1}=N(-2,1)$ while the conditional density of output $y$ given input $x_2$ is $\phi_2(y)=p_{y|x_2}=N(2,1)$.  The continuous output $y$ then is quantized to $N=4$ output levels $Z=(z_1,z_2,z_3,z_4)$.  We first discrete $y \in Y=R$ to $M=200$ pieces from $[-10,10]$ with the same width $\epsilon=0.1$. Thus, $Y=\{y_1,y_2,\dots,y_{200}\}$ with the conditional density $p_{y_j|x_i}$ and $p_{y_j}$, $\forall$ $i=1,2$ and $j=1,2,\dots,200$ can be determined using two given conditional densities $\phi_1(y)=N(-2,1)$ and $\phi_2(y)=N(2,1)$. For $\beta=2,\dots,13$, the curve in Fig. \ref{fig: 2} illustrates the optimal pairs $(I^*(X,Z),H^*(Z))$. For example, if one requires that $H(Z) \leq 1.18$, we should pick $\beta=6$ that produces $H^*(Z) =1.795$ and $I^*(X,Z)=0.87274$. 
		\begin{figure}
		\centering
		\includegraphics[width=2.7 in]{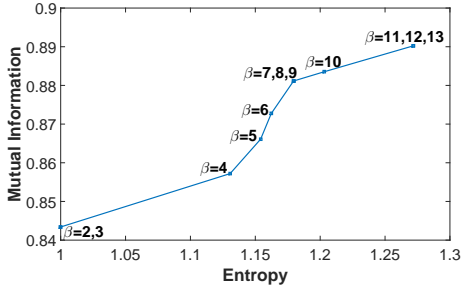}\\
		\caption{Optimal pairs $(I^*(X,Z),H^*(Z))$ corespond to $\beta=1,2,\dots,13$.}\label{fig: 2}
	\end{figure}	 
 
 
\section{Conclusion}
 \label{sec: conclusion} 
A polynomial time complexity algorithm is proposed that can find the globally optimal quantizer to maximize the mutual information between the input and the quantized output under an entropy-constrained if the channel input is binary. This result holds for any binary input channels with arbitrary number of quantized output. We also extend the result to binary input continuous binary output channels and show a sufficient condition such that  a single threshold quantizer is optimal. Both theoretical results and numerical results are provided to justify our techniques. 
\
\bibliographystyle{unsrt}
\bibliography{sample}

\begin{thebibliography}{10}

\bibitem{tal2011construct}
Ido Tal and Alexander Vardy.
\newblock How to construct polar codes.
\newblock {\em arXiv preprint arXiv:1105.6164}, 2011.

\bibitem{romero2015decoding}
Francisco Javier~Cuadros Romero and Brian~M Kurkoski.
\newblock Decoding ldpc codes with mutual information-maximizing lookup tables.
\newblock In {\em Information Theory (ISIT), 2015 IEEE International Symposium
  on}, pages 426--430. IEEE, 2015.

\bibitem{kurkoski2014quantization}
Brian~M Kurkoski and Hideki Yagi.
\newblock Quantization of binary-input discrete memoryless channels.
\newblock {\em IEEE Transactions on Information Theory}, 60(8):4544--4552,
  2014.

\bibitem{zhang2016low}
Jiuyang~Alan Zhang and Brian~M Kurkoski.
\newblock Low-complexity quantization of discrete memoryless channels.
\newblock In {\em 2016 International Symposium on Information Theory and Its
  Applications (ISITA)}, pages 448--452. IEEE, 2016.

\bibitem{winkelbauer2013channel}
Andreas Winkelbauer, Gerald Matz, and Andreas Burg.
\newblock Channel-optimized vector quantization with mutual information as
  fidelity criterion.
\newblock In {\em 2013 Asilomar Conference on Signals, Systems and Computers},
  pages 851--855. IEEE, 2013.

\bibitem{iwata2014quantizer}
Ken-ichi Iwata and Shin-ya Ozawa.
\newblock Quantizer design for outputs of binary-input discrete memoryless
  channels using smawk algorithm.
\newblock In {\em Information Theory (ISIT), 2014 IEEE International Symposium
  on}, pages 191--195. IEEE, 2014.

\bibitem{mathar2013threshold}
Rudolf Mathar and Meik D{\"o}rpinghaus.
\newblock Threshold optimization for capacity-achieving discrete input one-bit
  output quantization.
\newblock In {\em Information Theory Proceedings (ISIT), 2013 IEEE
  International Symposium on}, pages 1999--2003. IEEE, 2013.

\bibitem{sakai2014suboptimal}
Yuta Sakai and Ken-ichi Iwata.
\newblock Suboptimal quantizer design for outputs of discrete memoryless
  channels with a finite-input alphabet.
\newblock In {\em Information Theory and its Applications (ISITA), 2014
  International Symposium on}, pages 120--124. IEEE, 2014.

\bibitem{koch2013low}
Tobias Koch and Amos Lapidoth.
\newblock At low snr, asymmetric quantizers are better.
\newblock {\em IEEE Trans. Information Theory}, 59(9):5421--5445, 2013.

\bibitem{nguyen2018capacities}
Thuan Nguyen, Yu-Jung Chu, and Thinh Nguyen.
\newblock On the capacities of discrete memoryless thresholding channels.
\newblock In {\em 2018 IEEE 87th Vehicular Technology Conference (VTC Spring)},
  pages 1--5. IEEE, 2018.

\bibitem{he2019dynamic}
Xuan He, Kui Cai, Wentu Song, and Zhen Mei.
\newblock Dynamic programming for discrete memoryless channel quantization.
\newblock {\em arXiv preprint arXiv:1901.01659}, 2019.

\bibitem{strouse2017deterministic}
DJ~Strouse and David~J Schwab.
\newblock The deterministic information bottleneck.
\newblock {\em Neural computation}, 29(6):1611--1630, 2017.

\bibitem{nguyen2019minimizing}
Thuan Nguyen and Thinh Nguyen.
\newblock Minimizing impurity partition under constraints.
\newblock {\em arXiv preprint arXiv:1912.13141}, 2019.

\bibitem{Marco2004PerformanceOL}
Daniel Marco and David~L. Neuhoff.
\newblock Performance of low rate entropy-constrained scalar quantizers.
\newblock {\em International Symposium onInformation Theory, 2004. ISIT 2004.
  Proceedings.}, pages 495--, 2004.

\bibitem{Gyorgy2001OnTS}
A.~Gyorgy and Tam{\'a}s Linder.
\newblock On the structure of entropy-constrained scalar quantizers.
\newblock {\em Proceedings. 2001 IEEE International Symposium on Information
  Theory (IEEE Cat. No.01CH37252)}, pages 29--, 2001.

\bibitem{Chou1989EntropyconstrainedVQ}
Philip~A. Chou, Tom~D. Lookabaugh, and Robert~M. Gray.
\newblock Entropy-constrained vector quantization.
\newblock {\em IEEE Trans. Acoustics, Speech, and Signal Processing},
  37:31--42, 1989.

\bibitem{Gersho1991VectorQA}
Allen Gersho and Robert~M. Gray.
\newblock Vector quantization and signal compression.
\newblock In {\em The Kluwer international series in engineering and computer
  science}, 1991.

\bibitem{Zhao2008OnEV}
David~Yuheng Zhao, Jonas Samuelsson, and Mattias Nilsson.
\newblock On entropy-constrained vector quantization using.
\newblock 2008.

\bibitem{burshtein1992minimum}
David Burshtein, Vincent Della~Pietra, Dimitri Kanevsky, and Arthur Nadas.
\newblock Minimum impurity partitions.
\newblock {\em The Annals of Statistics}, pages 1637--1646, 1992.

\bibitem{chou1991optimal}
Philip~A. Chou.
\newblock Optimal partitioning for classification and regression trees.
\newblock {\em IEEE Transactions on Pattern Analysis \& Machine Intelligence},
  (4):340--354, 1991.

\bibitem{coppersmith1999partitioning}
Don Coppersmith, Se~June Hong, and Jonathan~RM Hosking.
\newblock Partitioning nominal attributes in decision trees.
\newblock {\em Data Mining and Knowledge Discovery}, 3(2):197--217, 1999.

\bibitem{wang2011ckmeans}
Haizhou Wang and Mingzhou Song.
\newblock Ckmeans. 1d. dp: optimal k-means clustering in one dimension by
  dynamic programming.
\newblock {\em The R journal}, 3(2):29, 2011.

\bibitem{kurkoski2017single}
Brian~M Kurkoski and Hideki Yagi.
\newblock Single-bit quantization of binary-input, continuous-output channels.
\newblock In {\em 2017 IEEE International Symposium on Information Theory
  (ISIT)}, pages 2088--2092. IEEE, 2017.

\end{thebibliography}

\end{document}